\def\BibTeX{{\rm B\kern-.05em{\sc i\kern-.025em b}\kern-.08em
		T\kern-.1667em\lower.7ex\hbox{E}\kern-.125emX}}	
\begin{document}
	
	\title{\LARGE \bf
		Overall Complexity Certification of a Standard Branch and Bound Method for Mixed-Integer Quadratic Programming} 
	
	\author{Shamisa Shoja, Daniel Arnström, and Daniel Axehill 
		\thanks{S. Shoja, D. Arnström, and D. Axehill are
                  with the Division of Automatic Control, Department of
			Electrical Engineering, Linköping University, Sweden. (Email: 
			{\tt\small \{shamisa.shoja, daniel.arnstrom, daniel.axehill\}@liu.se})}
	}
	
	\maketitle
	\thispagestyle{empty}
	\pagestyle{empty}
	
	\begin{abstract}
          This paper presents a method to certify the computational complexity of a standard Branch and Bound method for solving Mixed-Integer Quadratic Programming (MIQP) problems defined as instances of a multi-parametric MIQP. Beyond previous work, not only the size of the binary search tree is considered, but also the exact complexity of solving the relaxations in the nodes by using recent result from exact complexity certification of active-set QP methods. With the algorithm proposed in this paper, a total worst-case number of QP iterations to be performed in order to solve the MIQP problem can be determined as a function of the parameter in the problem. An important application of the proposed method is Model Predictive Control for hybrid systems, that can be formulated as an MIQP that has to be solved in real-time. The usefulness of the proposed method is successfully illustrated in numerical examples.

	\end{abstract}
	
	
		\newtheorem{lemma}{Lemma}
		\newtheorem{corollary}{Corollary}
		\newtheorem{theorem}{Theorem}
		\newtheorem{assumption}{Assumption}
		\newtheorem{remark}{Remark}
		\newtheorem{definition}{Definition}		
			
        \section{Introduction}
The main motivation for this work is model predictive control (MPC) for discrete-time hybrid systems. MPC is a model-based control strategy with the aim to design an optimal controller for multi-variable constrained systems, \cite{borrelli2017predictive}. 
In MPC, a finite-horizon optimal control problem is solved at each sampling time, starting at the current state.  Measurements are used to update the problem at the next sample, and the optimization is repeated over the shifted horizon.  

Hybrid systems arise naturally in applications where physical principles interact with discrete events. There are different models for hybrid systems, the one that will be considered in this work is Mixed-Logical Dynamical (MLD) systems, \cite{bemporad1999control}. 
In these models, both real-valued variables and binary-valued variables exist in the optimization problem to be solved online in each sample. The result is an optimization problem in the form of a Mixed-Integer Quadratic Program (MIQP), which are non-convex problems known to be $\mathcal{NP}$-hard \cite{wolsey2020integer}, 
and hence potentially more challenging to solve than the LP or QP required for linear MPC. 

The MIQPs can in MPC either be solved in real-time online or be formulated as a multi-parametric MIQP (mp-MIQP) and solved offline parametrically for a range of states. Once the problems have been solved offline, the necessary computations online are reduced to evaluating a look-up table, \cite{bemporad2002explicit}. 
However, it is very challenging to solve and compute an efficient data structure for a parametric solution of an mp-MIQP. Furthermore, a potentially large storage for the look-up table of the solution in the embedded system is required. For these reasons, the online approach is often the only realistic solution. For that approach to also be considered reliable for more critical applications, a priori guarantees for that the computational requirements of the problem at hand do not exceed the hardware capabilities are desirable. Similar work for LPs and QPs can be found in \cite{zeilinger2011real} and \cite{cimini2017exact,cimini2019complexity,arnstrom2021unifying}, respectively. The first work in that direction for MIQPs can be found in \cite{axehill2010improved}. 
The strategy to achieve this, and to obtain a significantly more practically useful complexity analysis than the conservative classical one, is to consider a specific set of MIQP problems encoded as an mp-MIQP. Each MIQP problem to be solved online is an instance of the mp-MIQP for a fixed parameter. Furthermore, beyond the traditional algebraic theoretical analysis, a certification algorithm is developed and used offline to analyze the complexity of all problems that might be requested to be solved online.

In this work, the optimization problem online is to be solved using a branch-and-bound-based MIQP solver. To compute the solution of an MIQP with integer variables modeled as binaries, a straightforward approach is to enumerate all possible combinations of binary variables and solve a QP for each such combination. As the number of binary variables increases, however, the computational effort for this approach grows exponentially. An alternative solution strategy commonly offering a remedy to this problem is the branch and bound (\bnb) method \cite{zhang1996branch,axehill2010improved,almer2013efficient,axehill2014parametric}, 
in which a sequence of relaxed QP problems are ordered and solved in a binary search tree to find an optimal mixed-integer solution. In the worst case, this method still requires the solution of exponentially many QPs, however, the complexity observed in practice is often significantly lower than explicit enumeration. The objective with the work in this paper is to continue the work in \cite{axehill2010improved} 
where useful sufficiently exact complexity guarantees are computed a priori for branch-and-bound focusing on the size of the binary search tree. The main contribution in this work is to also analyze the complexity of the QP problems for the relaxations in the nodes of the binary search tree. This is done using recently presented state-of-the-art methods for exact complexity certification for active-set QP solvers, \cite{cimini2017exact,arnstrom2021unifying}, 
where the exact bound on the worst-case number of iterations required by an active-set QP solver is determined. 
Hence, the focus in this work is on the overall
\emph{worst-case complexity} in a \bnb-based MIQP solver, in terms of the worst-case number of QP iterations in the (worst-case sequence of) subproblems necessary to solve in order to compute the optimal mixed-integer solution. This knowledge enables us to compute relevant complexity bounds on the worst-case complexity of MIQP solvers based on \bnb, which is of significant importance in the context of real-time MPC for hybrid systems.

The organization of the paper is as follows. Section  \ref{sec2} introduces the mp-MIQP problem formulation. Some background theory and the \bnb algorithm for solving MIQPs are revisited in Section \ref{sec3}. The main contribution of the paper, i.e., an algorithm for certification of MIQP, is presented in Section  \ref{sec:cert}.  
Finally, numerical experiments are provided in Section \ref{sec:num-ex} to illustrate the result of the proposed algorithm.


        	\section{Problem Formulation}  \label{sec2}
	For an MLD system, by considering the state variables as parameter vector $\theta$ and control actions as optimization variables $x$, the hybrid MPC problem can be cast into an mp-MIQP problem as follows \cite{bemporad1999control,axehill2010improved},  
	\begin{subequations} \label{eq3}
		\begin{align}
		\vspace{.1cm} 
		\min_{x} \quad & \frac{1}{2} x^{T}Hx + f^T x + \theta ^{T} f_{\theta }^{T}x, \label{eq3_1} \\		
		\textrm{s.t.} \quad & Ax \leqslant  b + W \theta, \label{eq3_2} \\
		& x_i \in  \{0,1\}, \hspace{.3cm} \forall i \in \mathcal{B} \label{eq3_3}
		\end{align}
	\end{subequations}
	where the decision variable consists of $n_c$ continuous and $n_b$ binary variables, i.e., $x = [x_c^T, x_b^T]^T \in \mathbb{R}^{n_c} \times \{0, 1\}^{n_b}$, $n = n_c + n_b$, and the parameter vector is $\theta \in \Theta_0 \subset \mathbb{R}^{n_{\theta}}$. The set $\Theta_0$ is assumed to be polyhedral. The MIQP problem, denoted by $\mathcal{P}(\theta)$, is given by
	$H \in \mathbb{S}_{++}^n$, $f \in \mathbb{R}^n$, $f_{\theta} \in \mathbb{R}^{n \times n_{\theta}}$, and the feasible set is determined by $A \in \mathbb{R}^{m \times n}$, $b \in \mathbb{R}^{m}$, and $W \in \mathbb{R}^{m \times n_{\theta}}$. 	
	Since the $n_b$ optimization variables indexed by the set $\mathcal{B}$ are binary-valued, the problem is no longer convex and is known to be $\mathcal{NP}$-hard \cite{wolsey2020integer}. 
	
	Throughout this paper, the notation $\{\mathcal{U}^i\}_{i=1}^N$ denotes a finite collection $\{\mathcal{U}^1, \dots \mathcal{U}^N\}$ of $N$ elements. When $N$ is unimportant, we use the notation $\{\mathcal{U}^i\}_i$ instead. 

        \section{Optimization preliminaries} \label{sec3}
In this section, some useful optimization preliminaries to be able to understand \bnb is briefly discussed.
\subsection{Quadratic Programming}
\label{sec:QP}
 The QP problem considered in this work is given by, 
\begin{subequations} \label{eqQP}
	\begin{align}
	\vspace{.1cm} 
	\min_{x} \quad & \frac{1}{2} x^{T}Hx + f^T x, \label{eqQP_1} \\	 
	\textrm{s.t.} \quad & Ax \leqslant  b, \label{eqQP_2}\\	
	 \quad & A_{\mathcal{E}}x =  b_{\mathcal{E}} \label{eqQP_3}
	\end{align}
\end{subequations}	
where $x \in \mathbb{R}^n$,  $f \in \mathbb{R}^n$, $b \in \mathbb{R}^{m}$,  $b_{\mathcal{E}} \in \mathbb{R}^{p}$ and matrices $H \in \mathbb{S}_{++}^n$, $A \in \mathbb{R}^{m \times n}$, and $A_{\mathcal{E}} \in \mathbb{R}^{p \times n}$. 
There are several methods to compute the solution of \eqref{eqQP}, such as active-set methods, interior point methods, and various gradient methods, \cite{nocedal2006numerical}. 
In this paper the (dual) active-set method in \cite{arnstrom2021efficient} 
is applied, since these are known to be very suitable in branch-and-bound. Even though this work does not currently consider that aspect, the active-set methods' warm starting capabilities are of great interest in branch and bound. Another important reason for this choice of method is that there exist recently presented certification methods for this solver, \cite{cimini2017exact,arnstrom2021unifying}. 
	  
Active constraints is an important concept in this type of methods which defines whether a constraint holds with equality, e.g., constraint $i$ in \eqref{eq3_2} is active if $A_i x(\theta) =  b_i + W_i \theta$, where subscript $i$ denotes the $i$th row of a matrix or vector. The optimal active set, denoted by $\mathcal{A}$, is defined as the set containing the indices of all constraints active at the optimal solution.
	
The idea of the active-set method is to iteratively make steps towards the optimal solution by solving a sequence of equality constrained QPs, in which some of the inequality constraints are imposed as active. These constraints are said to be included in the working set. See \cite{nocedal2006numerical} for a detailed description of the active-set method. 
In this work, the number of equality constrained QPs that the active-set solver solves to find the optimal solution of the QPs will be called the iteration number. In \cite{arnstrom2021unifying}, it is shown how this number can be found a priori and also how the relevant sequences of working-set changes can be computed as a function of the parameters in an mp-QP.

\subsection{Mixed-Integer Quadratic Programming} \label{subsec:MIQP}
 Consider the following MIQP problem,
\begin{subequations} \label{eq5}
	\begin{align}
	\vspace{.1cm} 
	\min_{x} \quad & \frac{1}{2} x^{T}Hx + \bar{f}^T x, \label{eq5_1} \\		
	\textrm{s.t.} \quad & Ax \leqslant  \bar{b}, \label{eq5_2} \\
	& x_i \in  \{0,1\}, \hspace{.3cm} \forall i \in \mathcal{B} \label{eq5_3}
	\end{align}
\end{subequations}	
Each instance obtained from the original problem \eqref{eq3} by fixing the parameters to a specific value $\bar{\theta}$ will be a problem in the form in~\eqref{eq5}, with $\bar{f} = f + f_{\theta }  \bar{\theta}$ and $\bar{b} = b + W \bar{\theta}$. Therefore, we denote the problem $\mathcal{P}(\bar{\theta})$. The decision variables, $H, A$, and $\mathcal{B}$ in \eqref{eq5} are defined as in \eqref{eq3}. 

In this work, the branch and bound (\bnb) method is used to solve the optimization problem in~\eqref{eq5}. In the \bnb search tree, the binary constraints are relaxed into interval constraints 
forming a so-called relaxation given by,
\begin{subequations} \label{eq4}
			\begin{align}
			\min_{x} \quad & \frac{1}{2} x^{T}Hx + \bar{f}^T x, \label{eq4_1} \\		
			\textrm{s.t.} \quad & Ax \leqslant  \bar{b}, \label{eq4_2} \\
			& 0 \leqslant x_i  \leqslant 1, \hspace{.3cm} \forall i \in \mathcal{B}, \label{eq4_3}\\
			& x_i = 0, \hspace{.1cm} \forall i \in \mathcal{B}_0, \hspace{.3cm} x_i = 1, \hspace{.1cm} \forall i \in \mathcal{B}_1 \label{eq4_4}
			\end{align}
		\end{subequations}
that is a convex QP problem in the form in \eqref{eqQP}, where $\mathcal{B}_0,\mathcal{B}_1 \subseteq \mathcal{B}$ and $\mathcal{B}_0 \cap \mathcal{B}_1 = \emptyset$. We denote this problem  $\mathcal{P}(\bar{\theta},\mathcal{B}_0,\mathcal{B}_1)$. 

 The procedure of \bnb is as follows. Starting in the root node, all binary constraints \eqref{eq5_3} in $\mathcal{P}(\bar{\theta})$ are relaxed to \eqref{eq4_3} resulting in a fully relaxed QP problem 
which is solved. At the next level in the tree, one of the remaining binary-constrained variables is fixed to $0$ and $1$, resulting into two new subproblems that form two new nodes in the \bnb tree which are called the children of the parent node. The procedure is repeated and new nodes are explored further down in the tree. In the bottom of the tree the leaf nodes are found in which all binary constraints have been relaxed.

An important property of \bnb is that the relaxation gives a lower bound on the optimal integer solution for the subtree below the node under consideration, whereas an integer feasible solution provides an upper bound on the value function valid in the entire tree. To use the result from the relaxations to prune parts of the search tree from explicitly being explored is a fundamental idea in \bnb. When the solution to a relaxation in a node is shown to satisfy one of the following cut conditions, then all nodes in the subtree below that node can be shown to be of no use and can therefore be disregarded, \ie, the tree is cut in that node \cite{almer2013efficient}: 
\begin{enumerate} 
	\item The relaxation is \emph{infeasible}. Adding constraints will not change that. Hence, the entire subtree below that node is infeasible.
	\item The optimal objective function value of the relaxation is \emph{greater} than the one of the best known integer solution so far. Adding constraints will not decrease the objective function value. Hence, a better objective function value cannot be found in the subtree below that node.
	\item The solution to the relaxation is \emph{integer feasible}. Adding constraints will not decrease the objective function value. Hence, an optimal solution for the entire subtree below that node has already been found.
\end{enumerate}
In this work, we assign the value function of an infeasible relaxed problem to infinity. Therefore, the first and second cut conditions can be tested simultaneously. In comparing the value function with the upper bound, if the comparison holds, that is if the node has an objective function value that is worse than the best known integer solution, the node will be cut. The integer feasibility cut condition is tested  
by looking at the active set, such that if the binary constraints \eqref{eq5_3} are active, i.e., all the binary decision variables are either $0$ or $1$, then the solution is integer feasible and the tree is pruned and the upper bound is updated. If none of the cut conditions hold, the node is split into two new subproblems by fixing a relaxed binary variable indexed by $k$ to $0$ and $1$, forming two new subproblems $\mathcal{P}(\bar{\theta},\mathcal{B}_0 \cup \{k\},\mathcal{B}_1)$ and $\mathcal{P}(\bar{\theta},\mathcal{B}_0,\mathcal{B}_1 \cup \{k\})$, respectively. These two subproblems are inserted in the sorted list $\mathcal{T}$, implementing a priority queue, to be analyzed as the tree is explored. The priority used in~$\mathcal{T}$ is determined by the choice of the tree exploration strategy (e.g., depth-first) and whether the left or right branch should be explored first

Algorithm \ref{alg1} summarizes the \bnb method for solving problem \eqref{eq5}. In this algorithm, the sorted list $\mathcal{T}$ stores the subproblems to be solved in a user-defined exploration order, $\bar{x}$ is the best integer feasible solution so far, and $\bar{J}$ is its corresponding objective function value (the upper bound).  
Moreover, $x$ and $J$ are the associated optimal solution and objective function value of the relaxation, respectively. As mentioned earlier, if the solution of a subproblem turns out to be infeasible, we set $J = \infty$. 

The QP relaxation  $\mathcal{P}(\bar{\theta},\mathcal{B}_0,\mathcal{B}_1)$ \eqref{eq4} is solved using an active-set method as outlined in Section~\ref{sec:QP}.
Of particular interest in this work is the number of QP iterations required to solve a relaxation, and the sum of all such QP iterations over the entire \bnb tree for each choice of the parameter.	


\begin{algorithm}[H]
	\caption{Branch and Bound for MIQP}
	\label{alg1}
	\begin{algorithmic}[1]
		\Require MIQP problem $\mathcal{P}(\bar{\theta})$ for $\bar{\theta}$ given 
		\Ensure $\bar{J}, \bar{x}$ 
		\State{$ \bar{J} \leftarrow \infty , \bar{x} \leftarrow $ void}
		\State{Push $\mathcal{P}(\bar{\theta},\emptyset,\emptyset)$ to $\mathcal{T}$}
		\While {$\mathcal{T} \neq \emptyset$}
		\State Pop $\mathcal{P}(\bar{\theta},\mathcal{B}_0,\mathcal{B}_1) $ from  $\mathcal{T}$ 
		\State  $J, x \leftarrow$ Solve $\mathcal{P}(\bar{\theta},\mathcal{B}_0,\mathcal{B}_1) $ \label{alg1_solv} 
		\If {$J \geq \bar{J} $} \label{alg1-cut2}
		\State There exists no feasible solution to $\mathcal{P}(\bar{\theta},\mathcal{B}_0,\mathcal{B}_1)$  which is better than $\bar{x} $ 
		\ElsIf {all binary variables are active } \label{alg1-cut3}		
		\State  Better integer feasible solution is found 
		\State $\bar{J} \leftarrow J $
		\State $\bar{x} \leftarrow x $
		\Else
		\State Select $k$ : $k \in \mathcal{B}$, $k \notin (\mathcal{B}_0 \cup \mathcal{B}_1) $ 
		\State Push $\mathcal{P}(\bar{\theta},\mathcal{B}_0 \cup \{k\},\mathcal{B}_1)$ and  $\mathcal{P}(\bar{\theta},\mathcal{B}_0,\mathcal{B}_1 \cup \{k\})$ to $\mathcal{T}$ \label{alg1-push}
		\EndIf		
		\EndWhile
	\end{algorithmic}
\end{algorithm}
%
%


			\section{Complexity certification of MIQP} \label{sec:cert}
	In this section, an algorithm that analyzes Algorithm \ref{alg1} for all values of the parameter $\theta$ is presented. As mentioned, in hybrid MPC, these parameters could be system states or\slash and reference signals. 
	To certify the solution process of an MIQP, the iterations of Algorithm \ref{alg1} are executed parametrically with respect to $\theta$, for a given parameter set $\Theta_0$. In particular, the framework to be presented is able to rigorously analyze and upper bound the computational complexity for the process, in terms of size of the \bnb tree or total number of QP iterations. 	
	
	Algorithm \ref{alg2} represents the proposed parametric \bnb-based MIQP certification algorithm with the aim of certifying the solution process of the problem in \eqref{eq3} in detail. An important property of the algorithm is that it iteratively divides and explores the parameter space based on the polyhedral partition from a QP certification algorithm. Here, $\kappa(\theta)$ denotes a complexity measure for solving the QP relaxation \eqref{eq4} denoted by $\mathcal{P}(\theta,\mathcal{B}_0,\mathcal{B}_1)$. In addition to the usual sorted list $\mathcal{T}$ in \bnb, there are two additional lists of tuples in Algorithm \ref{alg2}. One is the list $\mathcal{S}$ holding regions in the parameter space where the analysis has not completed yet. It contains tuples $\left ( \Theta,\kappa,\mathcal{T},\bar{\mathbb{J}}(\theta) \right )$, where $\Theta$ is the associated polyhedral parameter set, $\mathcal{T}$ is the normal list (\cmp Algorithm~\ref{alg1}) of \bnb to store subproblems, $\kappa$ is the complexity measure for $\theta \in \Theta$,  
	 and $\bar{\mathbb{J}}(\theta)$ is a collection of the upper bounds on the value function. Another related list is $\mathcal{F}$ which contains regions in the parameter space where the process has terminated, \ie, $\mathcal{F}$ holds the final partition. It consists of a tuple with the information $(\Theta,\kappa)$ for every terminated region with the associated complexity measure.	 
	 	
	For what follows, the following definition is required.
	\begin{definition} \label{def1}	
		A set of polyhedra $\{\mathcal{R}^i\}_{i=1}^N$ is a \emph{polyhedral partition} of $\Theta$ if $ \mathring{\mathcal{R}}^i \cap \mathring{\mathcal{R}}^j = \emptyset, i\neq j$, and $\cup_{i=1}^{N} \mathcal{R}^i = \Theta$, where  $\mathring{\mathcal{R}}^i$ denotes the interior of the region $\mathcal{R}^i$.
	\end{definition} 	

	At each iteration of Algorithm \ref{alg2}, a region $\Theta$ will be popped from $\mathcal{S}$. If the list $\mathcal{T}$ associated to that region is empty, the exploration of the tree has been completed for all parameters in $\Theta$ and is, hence, added to the final partition $\mathcal{F}$ (Step \ref{step:add-to-final}). Otherwise a node in $\mathcal{T}$ will be selected and the corresponding mp-QP relaxation will be solved and certified over $\Theta$ in Step \ref{alg2_cert}. 
	The function \textsc{QPcert} 
	here is the complexity certification algorithm for QPs which takes an mp-QP problem $\mathcal{P}(\theta,\mathcal{B}_0,\mathcal{B}_1)$ and a parameter region $\Theta$ as inputs and returns a polyhedral partition $\{\Theta^j\}_{j=1}^{N}$ of $\Theta$. In each region, the function provides the computational complexity $\kappa^j$ for solving $\mathcal{P}(\theta,\mathcal{B}_0,\mathcal{B}_1)$ for all parameters in $\Theta^j$, as well as the explicit solution of the mp-QP relaxation including the optimal active set $\mathcal{A}^j$ and the value function $J^j(\theta)$ which is convex piecewise quadratic over the polyhedral region. 
	If $\mathcal{P}(\theta)$ is infeasible in $\Theta^j$, we encode that by $J^j(\theta)=\infty$. 
	
	After the mp-QP relaxation has been processed, the sorted list $\mathcal{T}$ is updated in the  \textsc{updateTree} procedure for each new region based on the optimal active set $\mathcal{A}^i$ and the value function $J^{i}$. The update consists of testing the three cut conditions described in Section \ref{subsec:MIQP} \textit{parametrically}. If none of these conditions are invoked for $\Theta^i$, two new nodes are created according to standard \bnb (based on $\mathcal{A}^i$ and the binaries that are fixed in the current node) and added to $\mathcal{T}^i$. 		
		
	The \bnb algorithms in this work satisfy the following assumptions. 
	\begin{assumption} \label{assumption-order-tree}
		The tree exploration strategy is depth-first in Algorithms \ref{alg1} and \ref{alg2}. Moreover, the branch variable order is fixed
		(e.g., not parameter dependent) and known beforehand, and the order in which the $0$ and $1$ branch will be explored is fixed and known beforehand in both algorithms.
	\end{assumption}	
	
	 The proposed certification method is formally analyzed in the following subsections as follows. The spatial decomposition employed in Algorithm \ref{alg2} is described in Section \ref{subsec:spatial-decom}. Properties of the parametric \bnb search tree are formalized in Section \ref{subsec:prop of tree} and finally, the main contributions of the proposed algorithm are stated in Theorems \ref{theorem-sequence}  and \ref{cor-conserve-cert} in Section \ref{subsec:prop of cert}.
	 
			\begin{algorithm}[H]
				\caption{Certification of Branch and Bound for MIQP} 
				\label{alg2}
				\begin{algorithmic}[1] 
					\Require mp-MIQP problem $\mathcal{P}(\theta)$, and $\Theta_0$
					\Ensure Partition $\mathcal{F}=\{(\Theta^i,\kappa^i)\}_{i=1}^{N_f}$
					\State $\kappa_0 \leftarrow 0$, $\bar{\mathbb{J}}_0 \leftarrow \{\infty\}$ 
					\State $\mathcal{F} \leftarrow \emptyset$ 
					\State{Push $\mathcal{P} \left (\theta, \emptyset, \emptyset \right )$ to $\mathcal{T}_0$}
					\State Push $(\Theta_0, \kappa_0, \mathcal{T}_0,\bar{\mathbb{J}})$ to $\mathcal{S} $ 
					\While {$\mathcal{S} \neq \emptyset$} \label{step:outer-loop}
					\State Pop $ \left ( \Theta,\kappa, \mathcal{T},\bar{\mathbb{J}} \right )$ from $\mathcal{S} $ 
					\If {$\mathcal{T}=\emptyset$}
					\State Push $(\Theta,\kappa)$ to $\mathcal{F}$  and \textbf{goto} Step \ref{step:outer-loop} 
					\label{step:add-to-final}
					\EndIf
					\State Pop $\mathcal{P} \left ( \theta,\mathcal{B}_0,\mathcal{B}_1 \right )$ from  $\mathcal{T}$ \label{alg2_pop}
					\State $\{(\Theta^i, \kappa^i, \mathcal{A}^i, J^{i})\}_{i=1}^N \leftarrow \textsc{QPcert}( \mathcal{P}(\theta,\mathcal{B}_0,\mathcal{B}_1),\Theta )$  
					\label{alg2_cert} 
					\For {$j\in\{1,\dots,N\}$} \label{alg2_for}
					\State $\mathcal{T}^i, \bar{\mathbb{J}}^i \leftarrow \textsc{updateTree}(\mathcal{T},\Theta^i, \mathcal{A}^i, J^i,\bar{\mathbb{J}})$ 
					\State Push $ \left (\Theta^i,\kappa+\kappa^i,\mathcal{T}^i,\bar{\mathbb{J}}^i \right )$ to $\mathcal{S}$ \label{alg2_increase} 
					\EndFor
					\EndWhile
					\\ \hrulefill
					\Procedure{updateTree}{$\mathcal{T}^j,\Theta^j, \mathcal{A}^j, J^j,\bar{\mathbb{J}}^j$}
					\If {$ \exists \bar{J} \in \bar{\mathbb{J}}^j$ : $ J^{j}(\theta) \geq \bar{J}(\theta)$, $\forall \theta\in \Theta^j$} 
					\label{alg2_comp}	
					\State There exists no $\theta$ for which a feasible solution of $\mathcal{P}$  provides better solution 
					\ElsIf {all binary variables are in $\mathcal{A}^j$}  \label{alg2_feas}
					\State  Potentially better integer feasible solution has been found 
					\State $\bar{\mathbb{J}}^j(\theta) \leftarrow \bar{\mathbb{J}}^j(\theta) \cup \{J^j(\theta)\}$
					\Else				
					\State Select $k$ : $k \in \mathcal{B}$, $k \notin (\mathcal{B}_0 \cup \mathcal{B}_1) $ 
					\State Push $\mathcal{P}(\theta,\mathcal{B}_0 \cup \{k\},\mathcal{B}_1)$ and $\mathcal{P}(\theta,\mathcal{B}_0,\mathcal{B}_1 \cup \{k\})$ to $\mathcal{T}^j$ 
					\EndIf		
					\State \textbf{return} $\mathcal{T}^j, \bar{\mathbb{J}}^j$ 
					\EndProcedure 
				\end{algorithmic}
			\end{algorithm}
				
	\subsection{Decomposition of the parameter space} \label{subsec:spatial-decom} 
	As mentioned above, an interpretation of Algorithm \ref{alg2} is that it performs iterations of Algorithm \ref{alg1} parametrically. As a result, the parameter space will be partitioned depending on, for example, how the exploration of the \bnb tree differs for different parameters. We clarify how this partitioning occurs and some of its properties in this subsection.  

	At Step \ref{alg2_cert}, after processing the node, it is divided spatially into $N$ subproblems with corresponding regions $\Theta^i$, each holding a copy of the \bnb-list $\mathcal{T}$. The \textsc{updateTree} procedure which tests the pruning conditions parametrically is then applied for each local tree.
	
	The infeasibility and integer feasibility cut conditions can directly be applied to the output from $\textsc{QPcert} \left (\mathcal{P}(\theta,\mathcal{B}_0,\mathcal{B}_1),\Theta \right)$, given by $J^i(\theta)=\infty$ and the active set $\mathcal{A}^i$ for the region, respectively, with no further spatial partitioning. In the dominance cut condition however, if it is done exactly, further spatial partitioning is imposed by $ J^{i}(\theta) \geq \bar{J}(\theta)$ additionally splitting the node spatially into two new subproblems with finer regions. 
	To make the certification more tractable by maintaining the polyhedral structure of the partition, at the price of conservatism, the spatial partitioning originating from the dominance cut can be avoided by only cutting if the condition is satisfied for the entire region, \ie, $ J^{j}(\theta) \geq \bar{J}(\theta)$, $\forall \theta\in \Theta^j$, similarly to what was done in~\cite{axehill2010improved}.
	Note that the framework is in principle capable of exact certification, however, the price for that is that the geometry becomes non-polyhedral and will in general include regions described by quadratic functions. To investigate the details of the less tractable exact case is therefore decided to be beyond the scope of this work. The price for not considering the exact case is that some conservatism will be included in the certification result and multiple upper bounds related to what in parametric programming for MIQPs is known as ``overlaps'' have to be introduced. Note that, however, the result from Algorithm \ref{alg2} is guaranteed to \emph{upper} bound the online complexity.	
	The end result of the proposed algorithm is many local \bnb trees, which represent all possible \bnb trees that are used to compute the solution of the problem for all parameter values online. 		

 	 In the following, we ensure that the partitioning made in Algorithm 2 is correct, i.e., the generated regions cover the entire $\Theta_0$ and have no overlap among themselves. 
 	       
	\begin{lemma}[Maintenance of complete partition]
		\label{lem:maintenance}
		At an arbitrary iteration in Algorithm \ref{alg2}, the union of regions in $\mathcal{F}$ and $\mathcal{S}$ forms a partition of $\Theta_0$. 
	\end{lemma}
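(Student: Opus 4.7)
The plan is to prove the statement by induction on the iteration count of the outer \textbf{while}-loop (Step \ref{step:outer-loop}) of Algorithm \ref{alg2}, treating the pair $(\mathcal{F},\mathcal{S})$ as a state that evolves across iterations and showing that the partition invariant is preserved at each step.

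For the base case, I would inspect the initialization lines: after the pre-loop initialization, $\mathcal{F}=\emptyset$ and $\mathcal{S}$ contains the single tuple associated with the region $\Theta_0$. Hence the union of regions in $\mathcal{F}\cup\mathcal{S}$ is exactly $\Theta_0$, and because there is only one region the interior-disjointness condition of Definition \ref{def1} is trivially satisfied.

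For the inductive step, I would assume the invariant holds at the start of some iteration and analyze the two possible branches. In the first branch, a tuple $(\Theta,\kappa,\mathcal{T},\bar{\mathbb{J}})$ is popped from $\mathcal{S}$ with $\mathcal{T}=\emptyset$; the region $\Theta$ is then pushed to $\mathcal{F}$ in Step \ref{step:add-to-final}, so the overall collection of regions in $\mathcal{F}\cup\mathcal{S}$ is unchanged and the invariant is preserved verbatim. In the second branch, $\mathcal{T}\neq\emptyset$ and Step \ref{alg2_cert} invokes \textsc{QPcert}, which by the property stated just before Assumption \ref{assumption-order-tree} returns a polyhedral partition $\{\Theta^i\}_{i=1}^N$ of the popped region $\Theta$. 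Each $\Theta^i$ is pushed back to $\mathcal{S}$ in Step \ref{alg2_increase}, so the net effect on the union of regions in $\mathcal{F}\cup\mathcal{S}$ is that $\Theta$ is replaced by $\bigcup_{i=1}^N \Theta^i=\Theta$, leaving the total union equal to $\Theta_0$. For interior-disjointness, I would note that the newly inserted regions $\Theta^i$ have pairwise disjoint interiors by Definition \ref{def1} applied to the \textsc{QPcert} output, and each $\Theta^i$ is contained in $\Theta$, so $\mathring{\Theta}^i$ is disjoint from the interior of any region other than $\Theta$ that was in $\mathcal{F}\cup\mathcal{S}$ before the pop, by the inductive hypothesis.

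The main obstacle, such as it is, lies in formalizing the interior-disjointness across iterations rather than simply set-theoretic coverage: the argument must rely cleanly on the fact that \textsc{QPcert} delivers a polyhedral partition in the sense of Definition \ref{def1}, and on the observation that each spatial refinement operates strictly inside a single region of the current partition. Once these two facts are stated, combining them with the inductive hypothesis yields the claim immediately, and the conclusion follows by induction for every reachable iteration of the outer loop. \qed
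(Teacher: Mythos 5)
Your proof is correct and follows essentially the same route as the paper's: induction over iterations of the outer loop, with the base case being $\mathcal{F}=\emptyset$ and $\mathcal{S}=\{\Theta_0\}$, the $\mathcal{T}=\emptyset$ branch merely relocating a region from $\mathcal{S}$ to $\mathcal{F}$, and the $\mathcal{T}\neq\emptyset$ branch replacing $\Theta$ by the \textsc{QPcert} partition $\{\Theta^i\}_i$ whose union recovers $\Theta$ and whose interiors are pairwise disjoint and contained in $\Theta$. No gaps to report.
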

	\begin{proof}
		We will prove the lemma by induction. Let $\{\Theta^i_S\}_i$ and $\{\Theta^i_{\mathcal{F}}\}_i$ denote the regions contained in $\mathcal{S}$ and $\mathcal{F}$, respectively, at the start of an iteration. Similarly, let $\{\Theta^i_{S'}\}_i$ and $\{\Theta^i_{\mathcal{F}'}\}_i$ denote the regions contained in $\mathcal{S}$ and $\mathcal{F}$, respectively, at the end of the iteration. Finally, let $\mathcal{T}$ and $\Theta$ denote the list and region, respectively, that are selected from $\mathcal{S}$ at the start of the iteration.

		If $\mathcal{T}=\emptyset$ we get $\{\Theta_{\mathcal{F}'}^i\}_i = \{\Theta^i_{\mathcal{F}}\}_i \cup \{\Theta\}$ and $\{\Theta^i_{S'}\}_i = \{\Theta_S^i\}_i\setminus \{\Theta\}$ after Step~\ref{step:add-to-final} is executed. Clearly $\{\Theta^i_{\mathcal{F'}}\}_i$ and $\{\Theta^i_{S'}\}_i$ will form a partition of $\Theta_0$ if $\{\Theta^i_{\mathcal{F}}\}_i$ and $\{\Theta^i_S\}_i$ do, since we have the same underlying regions (only a single region $\Theta$ has been moved from $\mathcal{S}$ to $\mathcal{F}$ during the iteration). 
		
		Next, consider the case when $\mathcal{T}\neq \emptyset$. Then the regions in $\mathcal{F}$ remain unchanged and we will partition $\Theta$ into $\{\Theta^i\}_i$ in Step~\ref{alg2_cert}. Hence, we get $\{\Theta^i_{S'}\}_i = \left(\{\Theta^i_{S}\}_i \setminus\{\Theta\}\right) \cup \{\Theta^i\}_i$.
		Then, since $\cup_i\Theta^i = \Theta$ from the properties of \textsc{QPcert}, it follows that $\cup_i \Theta^i_{S'} = \cup_i \Theta^i_S$. Moreover, since $\Theta^i \subseteq \Theta$ $\forall i$ and that $\mathring{\Theta}^i \cap \mathring{\Theta}^j = \emptyset$ if $i\neq j$ (both, again, from the properties of \textsc{QPcert}), all regions in $\{\Theta^i_{S'}\}_i$ and $\{\Theta^i_{\mathcal{F}'}\}_i$ are disjoint if the regions $\{\Theta^i_S\}_i$ and $\{\Theta^i_{\mathcal{F}}\}_i$ are. In conclusion, the regions in $\mathcal{S}$ and $\mathcal{F}$ at the end of an iteration will form a partition of $\Theta_0$ if the regions in $\mathcal{S}$ and $\mathcal{F}$ do so at the start of the iteration (completing the induction step). 
		
		For the base case we have that $\mathcal{F} = \emptyset$ and $\mathcal{S}$ only contains $\Theta_0$ at the start of the algorithm, which trivially forms a partition of $\Theta_0$. 
	\end{proof}
	As a special case, Lemma 1 ensures that the entire region of interest $\Theta_0$ will be considered before termination.
	\begin{corollary}[Complete partition at termination]
		Assume that Algorithm \ref{alg2} has terminated with the partition $\{(\Theta^i,\kappa^i)\}_i$.
		Then $\cup_i\Theta^i = \Theta_0$ and $\mathring{\Theta}^i \cap \mathring{\Theta}^j = \emptyset$ for $i\neq j$. That is, $\Theta_i$ is a partition of $\Theta_0$.
	\end{corollary}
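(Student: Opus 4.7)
The plan is to obtain this corollary as a direct specialization of Lemma~\ref{lem:maintenance}, exploiting the termination condition of the outer \textbf{while} loop (Step~\ref{step:outer-loop}) of Algorithm~\ref{alg2}. The only thing that really needs to be observed is that when the algorithm terminates, the working list $\mathcal{S}$ must be empty, since the loop condition is precisely $\mathcal{S} \neq \emptyset$; hence at termination the ``in-progress'' regions disappear and only the final partition $\mathcal{F}$ remains.

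More precisely, I would argue as follows. By Lemma~\ref{lem:maintenance}, at the start (equivalently, end) of every iteration of the outer loop, the collection of underlying polyhedra in $\mathcal{S} \cup \mathcal{F}$ forms a polyhedral partition of $\Theta_0$ in the sense of Definition~\ref{def1}. Apply this invariant at the iteration at which Algorithm~\ref{alg2} terminates. At that point $\mathcal{S} = \emptyset$, so the union of regions in $\mathcal{F}$ equals the union of regions in $\mathcal{S} \cup \mathcal{F}$, which by the invariant equals $\Theta_0$. This gives $\cup_i \Theta^i = \Theta_0$. Pairwise disjointness of the interiors, $\mathring{\Theta}^i \cap \mathring{\Theta}^j = \emptyset$ for $i\neq j$, is inherited from the same invariant since it holds for \emph{all} regions in $\mathcal{S} \cup \mathcal{F}$ and in particular for those in $\mathcal{F}$ alone.

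There is essentially no hard step; the only subtlety worth stating explicitly is that the invariant of Lemma~\ref{lem:maintenance} is proved to hold at the boundary of every iteration, so it indeed applies in the limiting iteration at which $\mathcal{S}$ becomes empty and the algorithm exits the loop. I would therefore keep the proof to two or three short sentences, with an explicit pointer to Lemma~\ref{lem:maintenance} and Definition~\ref{def1}, and no additional case analysis.
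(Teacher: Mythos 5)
Your proposal is correct and matches the paper's own argument exactly: the paper likewise derives the corollary by combining Lemma~\ref{lem:maintenance} with the observation that Algorithm~\ref{alg2} terminates precisely when $\mathcal{S}=\emptyset$, so that $\mathcal{F}$ alone inherits the partition property. Your version merely spells out the same two-line reasoning in slightly more detail.
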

	\begin{proof}
		Follows from Lemma \ref{lem:maintenance} and that Algorithm \ref{alg2} terminates when $\mathcal{S}=\emptyset$.
	\end{proof}
	
%
%
	\subsection{Properties of the search tree}	\label{subsec:prop of tree}
	As clarified in Section \ref{subsec:spatial-decom}, the parametric analysis of the dominance cut is relaxed in this work since the cost of an exact analysis would be quadratic regions. To ensure that Algorithm \ref{alg2} still provides correct complexity bounds, despite this relaxation, we review some fundamental properties of the B\&B tree below, which we then use in Section \ref{subsec:prop of cert} to prove the correctness of the proposed method.
	
	Let a \textit{node} be characterized by a tuple $\eta \triangleq (\mathcal{B}_0, \mathcal{B}_1)$, where $\mathcal{B}_0$ and $\mathcal{B}_1$ are defined in \eqref{eq4}. The following definition is required to formalize the properties of the parametric \bnb. 
	
	\begin{definition} \label{descendent}	
	 The node $\eta=(\mathcal{B}_0, \mathcal{B}_1)$ is a \textit{descendant} to node $\hat{\eta}=(\hat{\mathcal{B}}_0, \hat{\mathcal{B}}_1)$, denoted by  $\eta\in\mathcal{D}(\mathcal{\hat{\eta}})$, if $\mathcal{B}_0 \supseteq \hat{\mathcal{B}}_0$ and $\mathcal{B}_1 \supseteq \hat{\mathcal{B}}_1$. 
	\end{definition}
	\begin{lemma}
		\label{lem:dominance-parent}
		Let $J^*_{\eta}(\theta)$ denote the value function for the mp-QP relaxation in node $\eta$ and let $\delta \in \mathcal{D}(\eta)$, then $J^*_{\eta}(\theta)\leq J_{\delta}^*(\theta)$. 
	\end{lemma}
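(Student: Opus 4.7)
The plan is to prove Lemma~2 by a direct feasible-set inclusion argument: since $\delta$ is obtained from $\eta$ by fixing additional binary variables, its relaxation is more constrained, so its optimal value can only be larger.

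First I would unpack Definition~\ref{descendent}: $\delta \in \mathcal{D}(\eta)$ means $\mathcal{B}_0^{\delta}\supseteq \mathcal{B}_0^{\eta}$ and $\mathcal{B}_1^{\delta}\supseteq \mathcal{B}_1^{\eta}$. Let $\mathcal{X}_{\eta}(\theta)$ and $\mathcal{X}_{\delta}(\theta)$ denote the feasible sets of the relaxations $\mathcal{P}(\theta,\mathcal{B}_0^{\eta},\mathcal{B}_1^{\eta})$ and $\mathcal{P}(\theta,\mathcal{B}_0^{\delta},\mathcal{B}_1^{\delta})$, respectively, as defined by constraints \eqref{eq4_2}--\eqref{eq4_4}. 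I would then observe that the defining constraints of $\mathcal{X}_{\delta}(\theta)$ are exactly those of $\mathcal{X}_{\eta}(\theta)$ augmented with the extra equalities $x_i = 0$ for $i \in \mathcal{B}_0^{\delta}\setminus\mathcal{B}_0^{\eta}$ and $x_i = 1$ for $i \in \mathcal{B}_1^{\delta}\setminus\mathcal{B}_1^{\eta}$. Consequently $\mathcal{X}_{\delta}(\theta) \subseteq \mathcal{X}_{\eta}(\theta)$ for every $\theta$.

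Next, since both relaxations share the same (parameter-dependent) objective $\tfrac{1}{2}x^{T}Hx+\bar{f}^{T}x$, minimization over the smaller set cannot yield a smaller value, giving
\begin{equation*}
J^{*}_{\eta}(\theta) \;=\; \min_{x\in \mathcal{X}_{\eta}(\theta)} \tfrac{1}{2}x^{T}Hx+\bar{f}^{T}x \;\le\; \min_{x\in \mathcal{X}_{\delta}(\theta)} \tfrac{1}{2}x^{T}Hx+\bar{f}^{T}x \;=\; J^{*}_{\delta}(\theta).
\end{equation*}
I would handle the degenerate cases separately but briefly: if $\mathcal{X}_{\delta}(\theta)=\emptyset$, then by the convention introduced in Section~\ref{subsec:MIQP}, $J^{*}_{\delta}(\theta)=\infty$, so the inequality holds trivially; if both sets are nonempty, the minima exist by strict convexity of the objective ($H\in\mathbb{S}_{++}^{n}$) on a polyhedral feasible set, and the inclusion argument applies verbatim.

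There is no real obstacle here: the statement is essentially the monotonicity of a minimization problem under constraint addition, and the only care needed is to make the treatment of infeasibility explicit so that the inequality remains valid pointwise in $\theta$ across the entire parameter domain (in particular on regions where $\textsc{QPcert}$ reports $J^{i}(\theta)=\infty$).
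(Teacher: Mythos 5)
Your proof is correct and follows essentially the same route as the paper's: the feasible set of the descendant's relaxation is contained in that of the ancestor's because only additional equality constraints are imposed, so the optimal value can only increase. You simply spell out the set inclusion and the infeasibility convention more explicitly than the paper, which states the same argument in one sentence.
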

	\begin{proof}
		It directly follows from standard arguments from \bnb that the feasible set for the mp-QP relaxation in node $\delta$ is a subset of the feasible set for the mp-QP relaxation in node $\eta$, since some additional equality constraints have been added in $\delta$. 
	\end{proof}
	Let $\mathbb{B}_\eta(\theta)$ be the set of nodes which have yielded an integer feasible solution up until node $\eta$ is processed, given an ordering of $\mathcal{T}$. Moreover, let $\bar{J}_{\eta}(\theta)$ denote the lowest (\ie, best) upper bound found up until $\eta$ is processed, i.e., 
	\begin{equation}
	\bar{J}_{\eta}(\theta) \triangleq \min_{\tilde{\eta}\in\mathbb{B}_{\eta}(\theta)} J^*_{\tilde{\eta}}(\theta). 
	\end{equation}
	
	The following lemma shows that if the dominance cut condition is invoked for node $\eta$, the best upper bound $\bar{J}_{\eta}(\theta)$ will not be changed (in particular not improved) by processing any descendant of $\eta$. 
	\begin{lemma}
		\label{lem:redundant-desc}
		Assume that $J^*_{\eta}(\theta)\geq \bar{J}_{\eta}(\theta)$ in a node $\eta$. Then the dominance cut condition will be satisfied in that node and 
		\begin{equation}
		\min_{\tilde{\eta}\in \mathbb{B}_{\eta}(\theta)\cup \mathcal{D}(\eta)} J^*_{\tilde{\eta}}(\theta) = \bar{J}_{\eta}(\theta).
		\end{equation}
	\end{lemma}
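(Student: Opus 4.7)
The proof plan is to split the claim into its two conjuncts and dispatch each by appealing to the definitions and to Lemma \ref{lem:dominance-parent}. The first conjunct, that the dominance cut is invoked at $\eta$, is immediate: the cut condition in the \textsc{updateTree} procedure is \emph{defined} to fire precisely when $J^*_\eta(\theta) \geq \bar{J}_\eta(\theta)$, which is the hypothesis. I would therefore dedicate only a single sentence to this part.

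For the second conjunct, the strategy is to show that every descendant has value function at least $\bar{J}_\eta(\theta)$, so that the minimum over $\mathbb{B}_\eta(\theta) \cup \mathcal{D}(\eta)$ is already attained on $\mathbb{B}_\eta(\theta)$. First I would pick an arbitrary $\delta \in \mathcal{D}(\eta)$ and invoke Lemma \ref{lem:dominance-parent} to obtain $J^*_\eta(\theta) \leq J^*_\delta(\theta)$. Chaining this with the hypothesis $\bar{J}_\eta(\theta) \leq J^*_\eta(\theta)$ yields $\bar{J}_\eta(\theta) \leq J^*_\delta(\theta)$ for every $\delta \in \mathcal{D}(\eta)$, so
\begin{equation*}
\min_{\delta \in \mathcal{D}(\eta)} J^*_\delta(\theta) \geq \bar{J}_\eta(\theta).
\end{equation*}

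Combining this with the definition $\bar{J}_\eta(\theta) = \min_{\tilde{\eta}\in\mathbb{B}_\eta(\theta)} J^*_{\tilde{\eta}}(\theta)$, splitting the joint minimum as
\begin{equation*}
\min_{\tilde{\eta}\in \mathbb{B}_{\eta}(\theta)\cup \mathcal{D}(\eta)} J^*_{\tilde{\eta}}(\theta) = \min\!\Bigl(\bar{J}_\eta(\theta),\ \min_{\delta \in \mathcal{D}(\eta)} J^*_\delta(\theta)\Bigr),
\end{equation*}
gives the desired equality $\bar{J}_\eta(\theta)$. There is no real obstacle here; the only point requiring slight care is to make sure the definition of $\bar{J}_\eta(\theta)$ as a minimum over $\mathbb{B}_\eta(\theta)$ is used consistently (rather than treating $\bar{J}_\eta$ as an opaque upper bound), since this is what makes the two-way bound tight rather than a mere inequality.
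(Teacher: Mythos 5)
Your proposal is correct and follows essentially the same route as the paper's proof: both split the minimum over $\mathbb{B}_{\eta}(\theta)\cup\mathcal{D}(\eta)$ into the two partial minima and then use the chain $J^*_{\delta}(\theta)\geq J^*_{\eta}(\theta)\geq \bar{J}_{\eta}(\theta)$ from Lemma~\ref{lem:dominance-parent} and the premise to conclude that the minimum over descendants cannot undercut $\bar{J}_{\eta}(\theta)$. The only cosmetic difference is that you explicitly dispose of the first conjunct (the cut firing by definition), which the paper leaves implicit.
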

	\begin{proof}
		We have that  
		\begin{equation*}
		\begin{aligned}
		\min_{\tilde{\eta}\in \mathbb{B}_{\eta}(\theta)\cup \mathcal{D}(\eta)} J^*_{\tilde{\eta}}(\theta) 
		&= \min\left\{\min_{\tilde{\eta}\in\mathbb{B}_{\eta}(\theta)}J^*_{\tilde{\eta}}(\theta),\min_{\delta\in \mathcal{D}(\eta)} J_{\delta}^*(\theta) \right\} \\
		&= \min\left\{\bar{J}_{\eta}(\theta), \min_{\delta\in \mathcal{D}(\eta)} J_{\delta}^*(\theta) 
		\right\} \\
		&= \bar{J}_{\eta}(\theta),
		\end{aligned}
		\end{equation*}
		where the last equality follows from $J_{\delta}^*(\theta)\geq J_{\eta}^*(\theta) \geq \bar{J}_{\eta}(\theta)$ $\forall\delta\in \mathcal{D}(\eta)$, i.e., from Lemma \ref{lem:dominance-parent} and the premise. 
	\end{proof}
	Hence, processing descendants of a node for which the dominance cut was invoked would not change the best known upper bound. As a result, the exploration of, and application of cut conditions in, the rest of the tree will not be affected if $\mathcal{T}\cup D(\eta)$ is used instead of $\mathcal{T}$ after the dominance cut is invoked in node $\mathcal{\eta}$. We formalize this notion in the following corollary.
	
	\begin{corollary}[Futility of processing dominated descendants]
		\label{cor:redundant-desc-exploration}
		Consider two cases of the list $\mathcal{T}$ after the dominance cut is invoked when processing node $\eta$,
		\begin{enumerate}
			\item $\mathcal{T}$ is used (normal operation) 
			\item $\mathcal{T}\leftarrow \mathcal{T}\cup \mathcal{D}(\eta)$ is used (redundant addition) 
		\end{enumerate}
		Let $\mathbb{B}^1$ be the set of \textit{all} nodes explored during case 1 and let $\mathbb{B}^2$ be the set of \textit{all} nodes explored during case 2. Then $\mathbb{B}^2 \setminus \mathbb{B}^1 = \mathcal{D}(\eta)$. That is, exploring the nodes in $\mathcal{D}(\eta)$ does not affect the subsequent exploration of nodes in the tree. 		
	\end{corollary}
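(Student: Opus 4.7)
The plan is to establish $\mathbb{B}^2\setminus\mathbb{B}^1=\mathcal{D}(\eta)$ by proving the two containments $\mathcal{D}(\eta)\subseteq \mathbb{B}^2\setminus\mathbb{B}^1$ and $\mathbb{B}^2\setminus\mathbb{B}^1\subseteq \mathcal{D}(\eta)$ separately. The key ingredients will be Lemma \ref{lem:redundant-desc}, together with the (structural, immediate from Definition \ref{descendent}) fact that $\mathcal{D}(\eta)$ is closed under taking children in the \bnb tree: pushing any index $k$ into either $\mathcal{B}_0$ or $\mathcal{B}_1$ of a node in $\mathcal{D}(\eta)$ preserves the superset relations defining descendance.

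For the first containment I would first observe that $\mathbb{B}^1\cap \mathcal{D}(\eta)=\emptyset$: in case 1 the dominance cut at $\eta$ blocks the push of $\eta$'s children, and under depth-first traversal with the fixed branching rule of Assumption \ref{assumption-order-tree} every node ever present in $\mathcal{T}$ afterwards is a sibling of an ancestor of some previously processed node, hence differs from $\eta$ on the fix of at least one binary and therefore so do all of its descendants. In case 2, on the other hand, all of $\mathcal{D}(\eta)$ is placed in $\mathcal{T}$ at once and every member is eventually popped (since $\mathcal{D}(\eta)$ is finite and the algorithm terminates), giving $\mathcal{D}(\eta)\subseteq \mathbb{B}^2$ and thus the first containment.

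For the second containment I would induct on the sequence of \emph{non-descendant} node processings, maintaining the invariant that, just before each such processing, both the current best upper bound $\bar{J}(\theta)$ and the sublist of $\mathcal{T}$ consisting of non-descendant subproblems agree between the two runs. The base case is immediate since the two runs coincide up to and including the processing of $\eta$. For the inductive step, the descendant processings that case 2 may interleave between two consecutive non-descendant processings cannot decrease $\bar{J}(\theta)$ (by Lemma \ref{lem:redundant-desc}) and cannot insert any non-descendant into $\mathcal{T}$ (by closure of $\mathcal{D}(\eta)$), so the invariant is preserved; the next non-descendant popped and the cut/branch action taken there, which depend only on $J^*_{\nu}(\theta)$ and $\bar{J}(\theta)$, are therefore identical in both runs. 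This yields $\mathbb{B}^2\setminus\mathcal{D}(\eta)\subseteq \mathbb{B}^1$ and hence the second containment.

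The main obstacle will be the depth-first priority bookkeeping: one must verify that descendant pushes in case 2 always sit above the pre-existing non-descendants in the LIFO queue and are flushed (together with any further descendants produced by branching, which themselves remain in $\mathcal{D}(\eta)$) before the algorithm returns to any non-descendant, so that the non-descendant sublist ordering is genuinely preserved. This is handled by combining the LIFO nature of depth-first search, the fact that $\mathcal{D}(\eta)$ is pushed as a contiguous block in case 2 immediately after $\eta$, and Assumption \ref{assumption-order-tree} which fixes both the branching variable selection and the $0/1$ branch order.
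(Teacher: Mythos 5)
Your proof is correct and follows essentially the same route as the paper's: both hinge on Lemma \ref{lem:redundant-desc} to argue that processing nodes in $\mathcal{D}(\eta)$ can never change the best upper bound, which is the only channel through which the redundant additions could influence the remaining exploration; your version simply makes the paper's two-sentence argument fully explicit via the two containments and the induction invariant. One simplification worth noting: since $J^*_{\delta}(\theta)\geq J^*_{\eta}(\theta)\geq \bar{J}_{\eta}(\theta)$ for every $\delta\in\mathcal{D}(\eta)$ (Lemma \ref{lem:dominance-parent} plus the premise), each redundantly added descendant is itself immediately dominance-cut when popped and therefore never branches, so the bookkeeping you flag as the main obstacle (descendants of descendants, LIFO ordering of the pushed block) largely collapses.
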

	\begin{proof}
		The only thing that might alter the exploration is if the upper bound is changed by processing additional nodes (which could prune nodes that would otherwise be explored). However, from Lemma \ref{lem:redundant-desc} processing any descendants of $\eta$ can never change the best upper bound, which in turn means that further exploration is unaffected by processing any $\delta \in \mathcal{D}(\eta)$.
	\end{proof}
	%
\subsection{Properties of the certification}	\label{subsec:prop of cert}	
	So far we have shown the parametric behavior of Algorithm \ref{alg2} and the properties of the search tree. 
	In this subsection, the properties of the proposed certification method are derived in the following theorems.
	\begin{theorem}	 \label{theorem-sequence} 
		Assume Assumption \ref{assumption-order-tree} holds. 
		Let $\mathbb{B}(\theta)$ denote the set of all nodes explored to solve the problem in \eqref{eq5} for a fixed $\theta$ using Algorithm~\ref{alg1}. Moreover, let $\hat{\mathbb{B}}_k(\theta)$ denote all nodes explored in Algorithm \ref{alg2} for region $\Theta_k$ before terminating. Then $\mathbb{B}(\theta) \subseteq \hat{\mathbb{B}}_k(\theta)$, $\forall \theta\in \Theta_k$.
	\end{theorem}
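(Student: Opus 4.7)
The plan is to exploit Assumption~\ref{assumption-order-tree} so that both Algorithm~\ref{alg1} run on $\theta$ and the ``tracked'' execution of Algorithm~\ref{alg2} that follows, at every outer iteration, the subregion containing $\theta$ traverse nodes in the same canonical depth-first order; hence the two executions can only differ in \emph{which} nodes get pruned. With that in place, $\mathbb{B}(\theta)\subseteq\hat{\mathbb{B}}_k(\theta)$ reduces to the claim that every cut invoked by Algorithm~\ref{alg2} on a subregion containing $\theta$ is also invoked by Algorithm~\ref{alg1} at $\theta$. I would prove this by induction on the canonical depth-first order of the nodes, with the base case (the root) being immediate.

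For the inductive step, fix a node $\pi$ and assume the claim holds for all previously processed nodes. The properties of \textsc{QPcert} (exact polyhedral partition, correct value function and optimal active set on each piece) handle the easy cut conditions directly: if Algorithm~\ref{alg2} invokes an infeasibility cut on the subregion $\Theta^i$ containing $\theta$, then $J^i(\theta)=\infty$, which Algorithm~\ref{alg1} also sees; similarly the integer-feasibility cut is invoked by Algorithm~\ref{alg2} precisely when every binary constraint lies in $\mathcal{A}^i$, and this forces integer feasibility at $\theta$ as well.

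The dominance cut is the real obstacle, because Algorithm~\ref{alg2} keeps a whole collection $\bar{\mathbb{J}}^i$ and triggers on a single $\bar J$ only if $J^i(\theta')\geq \bar J(\theta')$ for every $\theta'\in\Theta^i$; evaluated at our specific $\theta$ this yields $J^*_\pi(\theta)\geq\bar J(\theta)$, so it suffices to show that $\bar J(\theta)\geq \bar J_\pi(\theta)$, where $\bar J_\pi(\theta)$ is Algorithm~\ref{alg1}'s running upper bound when it processes $\pi$. The difficulty is that $\bar{\mathbb{J}}^i$ may contain extra bounds born at nodes Algorithm~\ref{alg1} never visits. To handle this I would invoke Lemma~\ref{lem:redundant-desc} together with Corollary~\ref{cor:redundant-desc-exploration}: by the inductive hypothesis, any such extra bound must originate at a descendant $\delta$ of some earlier node $\tilde\pi$ at which Algorithm~\ref{alg1} dominance-cut but Algorithm~\ref{alg2} did not (this is the only possible source of divergence between the two explored sets). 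Lemma~\ref{lem:dominance-parent} then gives $J^*_\delta(\theta)\geq J^*_{\tilde\pi}(\theta)\geq \bar J_{\tilde\pi}(\theta)\geq \bar J_\pi(\theta)$, so these extra bounds cannot lower the effective upper bound at $\theta$ and every $\bar J$ that actually triggers the cut satisfies $\bar J(\theta)\geq \bar J_\pi(\theta)$, as required.

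Combining the three cases, if an ancestor of $\eta\in\mathbb{B}(\theta)$ were cut by Algorithm~\ref{alg2} then it would also be cut by Algorithm~\ref{alg1} at $\theta$, contradicting $\eta\in\mathbb{B}(\theta)$. Therefore every $\eta\in\mathbb{B}(\theta)$ is reached in the tracked execution and lies in $\hat{\mathbb{B}}_k(\theta)$, closing the induction. The hardest part is precisely the upper-bound compatibility argument for the dominance cut, which is why the preceding Lemma~\ref{lem:redundant-desc} and Corollary~\ref{cor:redundant-desc-exploration} were established first.
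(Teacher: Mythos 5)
Your proposal is correct and follows essentially the same route as the paper's proof: both rely on the exactness of \textsc{QPcert} for the infeasibility and integer-feasibility cuts, observe that the conservative dominance cut in Algorithm~\ref{alg2} can only fail to prune (never prune a node Algorithm~\ref{alg1} would explore), and invoke Lemma~\ref{lem:redundant-desc} and Corollary~\ref{cor:redundant-desc-exploration} to show the resulting extra exploration is harmless. If anything, you make explicit a step the paper leaves implicit, namely that every upper bound accumulated in $\bar{\mathbb{J}}$ (including those born at nodes Algorithm~\ref{alg1} never visits) evaluates at $\theta$ to at least Algorithm~\ref{alg1}'s running bound, so a dominance cut in Algorithm~\ref{alg2} always implies the corresponding cut in Algorithm~\ref{alg1}.
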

	\begin{proof} \label{proof-theorem-sequence}  
		First, for the analysis to be meaningful, we stress that it is implicitly assumed that \textsc{QPcert} certifies the QP solver correctly. In particular, this means that the active set and feasibility is correctly identified for each parameter value. As a result, the correctly identified active set and infeasibility directly leads to correctly invoked integer feasibility and infeasibility cuts, respectively. What remains is the dominance cut condition, which will now be analyzed in more detail.
		Consider the processing of node $\eta$ on region $\Theta$ in Algorithm \ref{alg2} and assume that a conservative upper bound $\tilde{J}(\theta)$ on $\Theta$ is used for the dominance cut (i.e., $\tilde{J}(\theta)\geq \bar{J}_{\eta}(\theta), \forall \theta\in \Theta$). Then we have the following cases than could happen in Step \ref{alg2_comp} of Algorithm \ref{alg2}:
		\begin{enumerate}
			\item $J^*_{\eta}(\theta) \geq \tilde{J}(\theta)$ $\forall \theta\in \Theta$ $\rightarrow$ dominance cut invoked correctly $\forall\theta \in \Theta$. 
			\item $\exists \tilde{\theta}\in \Theta$ such that $ J^*_{\eta}(\tilde{\theta}) <   \tilde{J}(\theta)$ 
			\begin{enumerate}
				\item $J^*_{\eta}(\theta) < \bar{J}_{\eta}(\theta)$ $\forall \theta\in \Theta$ $\rightarrow$ dominance cut dismissed correctly for all $\theta \in \Theta$. 
				\item Otherwise dominance cut is dismissed incorrectly for all $\theta\in\Theta^{\epsilon}\triangleq\{\theta\in \Theta: J_{\eta}^*(\theta) \geq \bar{J}_{\eta}(\theta)\}$
			\end{enumerate}
		\end{enumerate}
		
		The most critical situation is in 2b), where we might explore descendants to $\eta$ in the certification algorithm, while these would be pruned in Algorithm 1 pointwise. 		
		The possible extra nodes in the analysis are added to $\mathcal{T}^k$ according to Assumption \ref{assumption-order-tree} and consequently, when a new node  is popped from the list in Step \ref{alg2_pop} of Algorithm \ref{alg2} to be analyzed, it is either a relevant one that would be solved in the online solver for a specific parameter in the corresponding region, or a redundant one which has been cut away in Algorithm \ref{alg1}. For parameters with redundant nodes where case 2b) in the proof of Theorem~\ref{theorem-sequence} occurs, we have that $J^*_{\eta}(\theta)\geq \bar{J}_{\eta}(\theta)$. According to Corollary \ref{cor:redundant-desc-exploration}, the extra exploration of descendants to $\eta$ will not affect the other exploration. Therefore, $ \hat{\mathbb{B}}_k(\theta) \supseteq \mathbb{B}(\theta)$, $\forall \theta\in \Theta_k$.
	\end{proof}
	 
 	\begin{theorem} \label{cor-conserve-cert}	
 	Assume Assumption \ref{assumption-order-tree} holds. 
 	Let $\kappa(\theta)$ be a complexity measure for solving the QP relaxation $\mathcal{P}(\theta)$ returned by $\textsc{QPcert} \left ( \mathcal{P}(\theta),\Theta \right )$ and that it satisfies $\kappa^j = \kappa(\theta)$, $\forall \theta\in \Theta^j \subseteq \Theta $, $\forall j$. Moreover, let $\{\Theta^i, \kappa^i_{tot} \}_{i=1}^{N_{\mathcal{F}}}$ be the collection of tuples in  $\mathcal{F}$ returned by Algorithm \ref{alg2} and let $\kappa^*_{tot}(\theta)$ be the total complexity for solving all QP relaxations encountered in Algorithm \ref{alg1} for $\theta \in \Theta_0 $. Then $\kappa^i_{tot} \geq \kappa^*_{tot}(\theta)$, $\forall \theta\in \Theta^i$.
 	\end{theorem}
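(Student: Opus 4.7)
The plan is to express both $\kappa^i_{tot}$ and $\kappa^*_{tot}(\theta)$ as sums of per-node QP complexities over a set of explored nodes, and then to apply Theorem~\ref{theorem-sequence} together with non-negativity of the complexity measure to conclude the desired inequality.

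First I would unwind the accumulation in Step~\ref{alg2_increase} of Algorithm~\ref{alg2}. Each region $\Theta^i$ pushed to $\mathcal{F}$ is the result of a chain of region refinements performed by successive calls to \textsc{QPcert}. Along this chain, each processed node $\eta$ contributes its certified QP complexity $\kappa_\eta^i$ (valid uniformly on the current subregion) to the running sum $\kappa$. Hence, writing $\hat{\mathbb{B}}_i$ for the set of all nodes processed along this chain, we obtain
\begin{equation*}
\kappa^i_{tot} \;=\; \sum_{\eta \in \hat{\mathbb{B}}_i} \kappa^i_\eta.
\end{equation*}
By the assumption on $\kappa$ in the theorem statement, $\kappa^i_\eta = \kappa_\eta(\theta)$ for every $\theta\in\Theta^i$ and every processed node $\eta$, where $\kappa_\eta(\theta)$ denotes the exact number of QP iterations that the online solver in Algorithm~\ref{alg1} would use at node $\eta$ for parameter $\theta$. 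Similarly, $\kappa^*_{tot}(\theta) = \sum_{\eta\in \mathbb{B}(\theta)}\kappa_\eta(\theta)$ by definition of the pointwise total complexity.

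Next I would invoke Theorem~\ref{theorem-sequence}: for every $\theta\in\Theta^i$ we have $\mathbb{B}(\theta)\subseteq \hat{\mathbb{B}}_i(\theta) \subseteq \hat{\mathbb{B}}_i$, where $\hat{\mathbb{B}}_i(\theta)$ is the set of nodes the certification actually explores that would also be explored at $\theta$, and $\hat{\mathbb{B}}_i$ additionally contains the (possibly redundant) descendants that appear because the dominance cut is only invoked uniformly on the region. Since $\kappa_\eta(\theta)\geq 0$ for every node and every parameter, dropping from $\hat{\mathbb{B}}_i$ to $\mathbb{B}(\theta)$ can only decrease the sum. Combining the three identities gives, for any $\theta\in\Theta^i$,
\begin{equation*}
\kappa^*_{tot}(\theta) \;=\; \sum_{\eta\in \mathbb{B}(\theta)} \kappa_\eta(\theta) \;\leq\; \sum_{\eta\in \hat{\mathbb{B}}_i}\kappa_\eta(\theta) \;=\; \sum_{\eta\in \hat{\mathbb{B}}_i}\kappa^i_\eta \;=\; \kappa^i_{tot},
\end{equation*}
which is the claim.

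The main obstacle I expect is the bookkeeping in the first step: arguing rigorously that the scalar $\kappa^i_{tot}$ stored with the final region indeed decomposes cleanly as a sum of per-node certified complexities over a well-defined set $\hat{\mathbb{B}}_i$ that coincides (up to redundant descendants) with the set of nodes visited by Algorithm~\ref{alg1} at any $\theta\in\Theta^i$. Once this identification is made precise via induction on the outer loop of Algorithm~\ref{alg2} (each iteration either pushes a region to $\mathcal{F}$ untouched or refines it while augmenting the complexity counter by exactly the certified QP cost on the corresponding subregion), the rest follows immediately from Theorem~\ref{theorem-sequence} and non-negativity. The assumption that $\kappa$ is constant on each QP-certification region is what allows the pointwise and regionwise per-node costs to be identified; without it one would only obtain a bound in terms of a supremum of $\kappa(\theta)$ over each region.
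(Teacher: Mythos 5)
Your proposal is correct and follows essentially the same route as the paper's own (much terser) proof: both reduce the claim to the node-superset property of Theorem~\ref{theorem-sequence} and the fact that summing non-negative per-node complexities over a superset of nodes can only increase the total. You simply make explicit the sum decomposition of $\kappa^i_{tot}$ along Step~\ref{alg2_increase} and the non-negativity of the per-node measure, both of which the paper leaves implicit.
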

 	\begin{proof} \label{proof-cor-conserve-cert} 
 	Based on the result from Theorem \ref{theorem-sequence}, the QP relaxations considered in Algorithm \ref{alg2} for a single point $\tilde{\theta}$ will be a superset of the QP relaxations that would be solved in Algorithm \ref{alg1}. This in turn means that the total complexity $\{\kappa^i_{tot}\}_{i}$ returned by Algorithm 2 will be an upper bound on the total complexity of Algorithm 1 (pointwise), i.e., $\kappa^i_{tot} \geq \kappa^*_{tot}(\theta)$, $\forall \theta\in \Theta^i$.   
 	\end{proof}

	\begin{remark} \label{remark:choise_kappa} 
		There is a freedom in the choice of complexity measure and Algorithm \ref{alg2} is not restricted to a specific certification measure. If the QP certification in \cite{arnstrom2021unifying} is employed in Step \ref{alg2_cert}, then the complexity measure $\kappa^j$ will be the iteration number, that is the total number of linear system equations that have to be solved. In other words, $\kappa^j$ will be the number of equality constrained QPs that an active-set method requires to solve to reach the optimal solution. 
		Another alternative is to fix $\kappa^j$ to $1$. Then the complexity measure will be the total number of nodes required to be explored in the \bnb tree, i.e., the number of relaxations (as in~\cite{axehill2010improved}). The other important choice is if $\kappa^j$ holds the number of floating point operations (flops) for solving mp-QPs, given by the QP certification algorithm. It then results in the total number of operations performed to solve all the relaxations in \bnb.
	\end{remark}	
	


        \section{Numerical Example} \label{sec:num-ex}
In this section, the proposed algorithm is tested in numerical experiments to illustrate its usefulness. To experimentally verify the analysis result, the upper bound on the complexity from Algorithm \ref{alg2} in a given $\theta$ is compared with the result of the online solver implementing Algorithm~\ref{alg1}. In the experiment, the complexity measure $\kappa(\theta)$ is chosen as the number of QP iterations. That is, for each value of $\theta\in\Theta_0$, the presented end result is the upper bound of the accumulated number of QP iterations from all relaxations necessary to solve in order to solve the mixed-integer problem for that particular value of $\theta$. In the experiments, the QP certification algorithm from \cite{arnstrom2021unifying} was used to certify the relaxations in Step \ref{alg2_cert} of Algorithm \ref{alg2}.
	
	For the evaluation, ten random mp-MIQPs have been generated with $n_c=2$, $n_b=4$, $m=6$, and $n_{\theta}=2$. The mp-MIQPs have the following form
	\begin{align*} \label{eq6}
	H &= \bar{H} \times \bar{H}^T, &  \quad f & \sim \mathcal{N}(0,1),  & \quad f_{\theta} & \sim \mathcal{N}(0,1), \\ 
	A &  \sim \mathcal{U}([0,1]), & \quad b &  \sim \mathcal{U}([1,2]),  & \quad  W & \sim \mathcal{U}([0,2]). 
	\end{align*}
	where $\bar{H} \sim \mathcal{N}(0,1)$, i.e., the Hessian matrices have been chosen such that they are symmetric and positive definite.  Furthermore, the parameter set considered is  $-1 \leq \theta \leq  1$. 
	
	Note that in the experiments, the conservative comparison is performed in the dominance cut in Algorithm \ref{alg2} and as a result the partition will be polyhedral. The computation time required for the complexity certification of each one of the ten random examples are listed in Table \ref{tab1}, when executed on an Intel$^{\circledR}$ Core 1.8 GHz i7-8565U CPU. Furthermore, the worst-case overall QP iteration count $\kappa_{max}$ of terminated regions  is provided in Table \ref{tab1}.

\begin{table}[htbp] 
	\caption{Results from a MATLAB implementation of Algorithm \ref{alg2} for 10 randomly generated mp-MIQPs.
	$\kappa_{max}$ is the worst-case accumulated iteration number,  
	and $t_{cert}$ denotes the computation time for the certification.}
	\label{tab1}
	\begin{center}
		\begin{tabular}{|c|c|c|}
			\hline
			$mp-MIQP$ & $t_{cert}$ &  $\kappa_{max}$  \\  
			\hline
			\# 1 & 12  & 27  \\ 
			\hline
			\# 2 & 231  & 41 \\ 
			\hline
			\# 3  & 97 & 51 \\ 
			\hline
			\# 4 & 332  & 51 \\ 
			\hline
			\# 5 & 143  & 46 \\ 
			\hline
			\# 6 & 173  & 38 \\ 
			\hline
			\# 7 & 399  & 48 \\ 
			\hline
			\# 8  & 572 & 47 \\ 
			\hline
			\# 9  & 327 & 47 \\ 
			\hline
			\# 10 & 785 & 58 \\ 
			\hline
		\end{tabular}
	\end{center}
\end{table}

	The resulting partitioning of the parameter space based on the total number of iterations for one of the examples of random mp-MIQPs is shown in Fig.~\ref{fig1}, where parameters in a region that share the same number of iterations are illustrated using the same color.
	
	\begin{figure}[htbp] 
		\centerline{\includegraphics[scale=0.36]{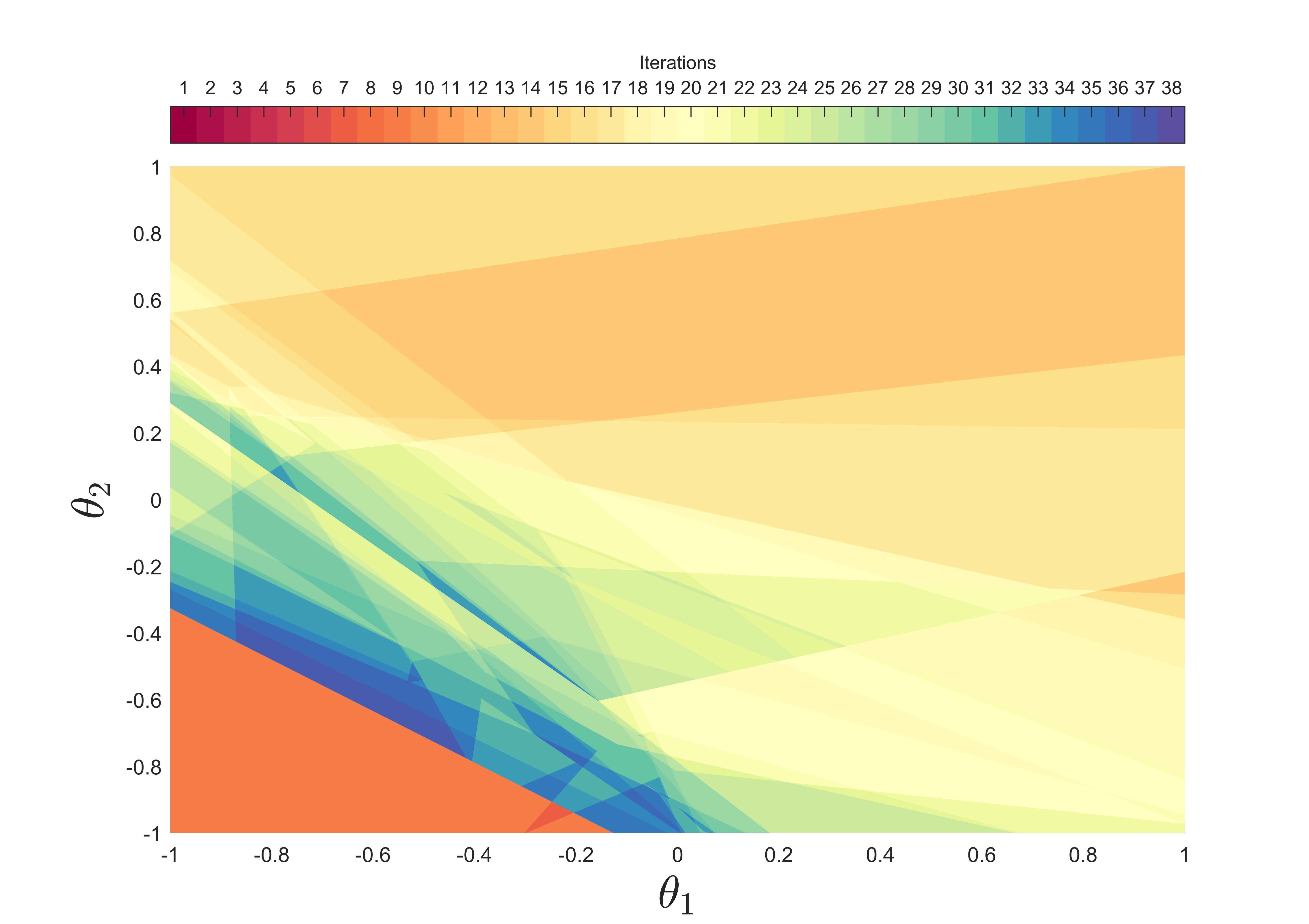}}
		\caption{Results from Algorithm~\ref{alg2} applied to a random example. The colors illustrate the accumulated number of QP iterations for a specific $\theta$.}
		\label{fig1}
	\end{figure}
	
	In order to verify the result from the certification, the worst-case iteration bounds obtained by the presented method were compared with results from Monte-Carlo (MC) simulations, i.e., applying Algorithm \ref{alg1} to the same MIQP problems \eqref{eq5} for some specific parameters 
	from the parameter set. To have the same complexity measure also online, the dual-active set method in \cite{arnstrom2021efficient}  
	was applied to solve the QP relaxations in Step \ref{alg1_solv} of Algorithm \ref{alg1}, and the total number of QP iterations performed was summed. 
	
	Fig.~\ref{fig2} demonstrates the result of solving the online MIQP problem for 10000 samples on a deterministic grid in the parameter space, for the same example as in Fig.~\ref{fig1}. The Chebychev centers of regions in Fig.~\ref{fig1} were also included as sample points to guarantee that there was at least one sample from each region. As the figures indicate, the result from the complexity certification coincides with the online algorithm in all sample points, despite that the conservative upper bound is used in Algorithm \ref{alg2}. The conclusions from the experiments are that the quality of the upper bound is clearly practically useful and the bound seems to be very close to tight.
	
	\begin{figure}[htbp] 
		\centerline{\includegraphics[scale=0.4]{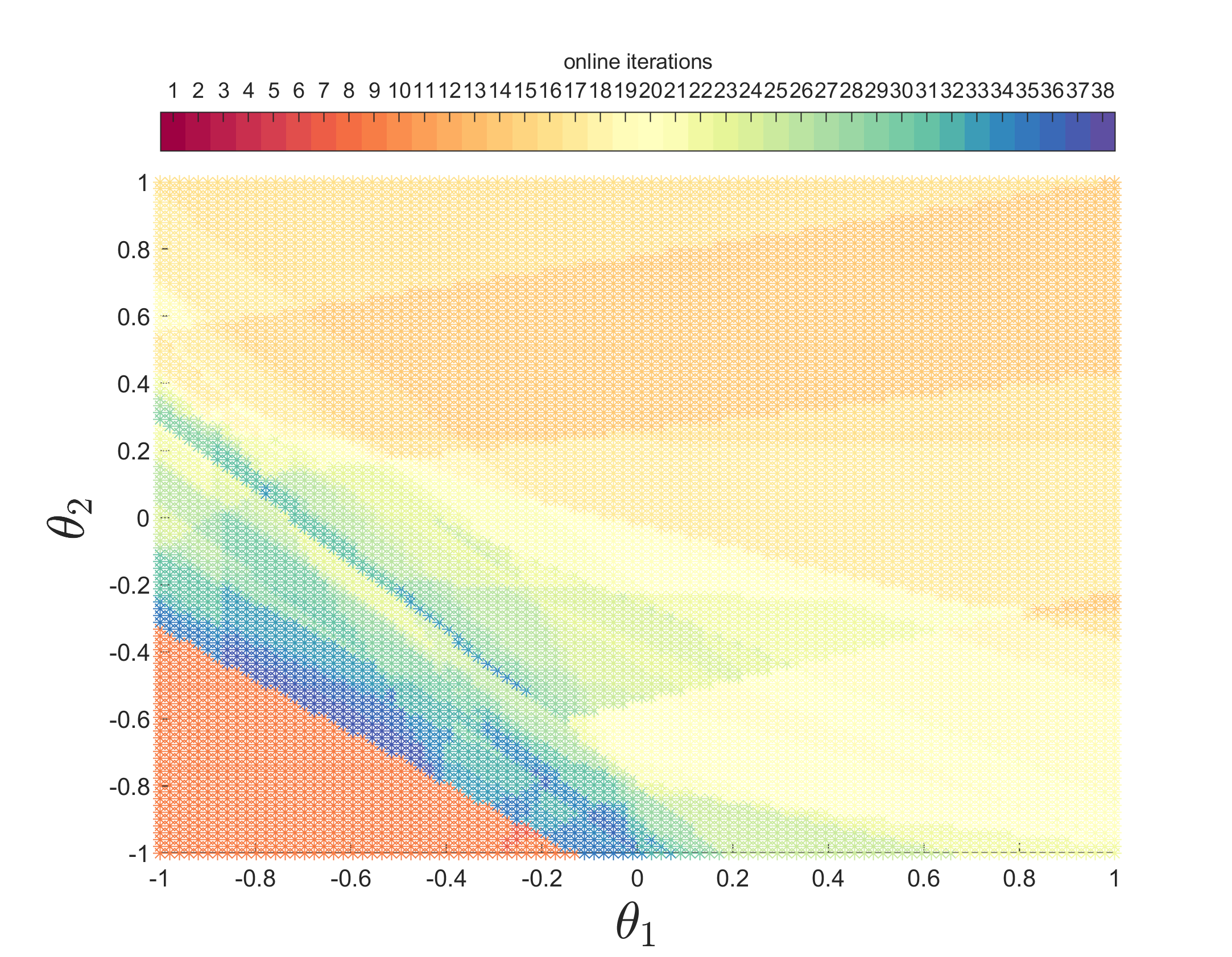}}	
		\caption{The total QP iteration number for 10000 samples specified by * in the parameter space derived by applying online MIQP to the same example as in Fig.~\ref{fig1}. Points with the same color share the same number of complexity numbers.}
		\label{fig2}
	\end{figure}	
	
	%
	To illustrate the result of the certification method for higher dimensional problems, random examples were generated with $8$ binary decision variables and $8$ constraints. The remaining variables were defined as before with the appropriate dimensions. 
	Algorithm \ref{alg2} was applied to certify the random MIQPs. The maximum total iteration count in the worst region here was recorded as $\kappa_{max} = 126$. 	
	To verify the result, the online MIQP problem was solved in $10000$ sample points on a deterministic grid in the parameter space. The minimum difference between the certification results of the analysis and the online algorithm in the sample points was $0$, verifying that the analysis would never underestimate the complexity. 
	In $7$ percent of analyzed points 
	there were non-zero variations though, in which the maximum difference was 22.	
	The difference originates from the dominance cut condition in Step \ref{alg2_comp} of Algorithm \ref{alg2}, which as described in Section \ref{subsec:spatial-decom}, is somewhat conservative. Note that the result from the proposed certification would be exact if the comparison was done exactly. 
	
	All numerical experiments were implemented in MATLAB and Gurobi 9.0.2 \cite{gurobi} 
	was used to solve the potentially indefinite QP optimization problem in Step \ref{alg2_comp} of Algorithm \ref{alg2}.

		\section{Conclusion} 
In this paper, complexity certification of a standard branch and bound method for MIQPs has been addressed. An algorithm for computing a useful upper bound of the worst-case computational complexity for solving any possible MIQP that can arise from a specific parameter in a polyhedral parameter set has been presented. Compared to the previous work, the current work uses recent algorithms for exact complexity certification of active-set QP methods, enabling also taking into account the complexity originating from solving the relaxations in the nodes. Even though the main focus in this work has been a complexity measure in terms of accumulated number of QP iterations, there is a freedom in this choice and alternatives such as flops and the size of the search tree can in principle be considered. Results from numerical experiments illustrate that the result is in general conservative but still useful. In fact, in many of the considered examples, the computed bound is tight. In future work, MILPs will be considered, the details of the exact non-conservative application of the dominance cut condition will be studied, and more general branching-rules will be considered.

		
        \section*{Acknowledgment}
This work was partially supported by the Wallenberg AI, Autonomous
Systems and Software Program (WASP) funded by the Knut and Alice
Wallenberg Foundation.

        
		\bibliography{IEEEabrv,references} 

\end{document}